\newtheorem{theorem}{Theorem}
\newtheorem{example}[theorem]{Example}
\newtheorem{lemma}[theorem]{Lemma}
\newtheorem{proposition}[theorem]{Proposition}
\newtheorem{definition}[theorem]{Definition}
\newtheorem{remark}[theorem]{Remark}
\numberwithin{equation}{section}
\newcommand{\F}{\mathbb{F}}
\newcommand{\Nsq}{{\rm N}_{q^2/q}}
\title[$2$-quasi-perfect Lee codes and abelian Ramanujan graphs]{$2$-quasi-perfect Lee codes and abelian Ramanujan graphs: a new construction and relationship}
\author[Shohei Satake]{Shohei Satake}
\address{Shohei Satake \\ Department of Applied Mathematics, Faculty of Science, Fukuoka University,  
Fukuoka,  
Japan}
\email{s.satake@fukuoka-u.ac.jp}
\subjclass[2020]{94B05,\:94B27,\:05C48}
\keywords{Quasi-perfect Lee codes, abelian Ramanujan graphs, Li's graphs, Finite Euclidean graphs.}
\begin{document}

\begin{abstract}
This paper presents a new explicit infinite family of 2-quasi-perfect $p$-ary Lee codes of length $\frac{q-1}{2}$ and dimension $\frac{q-1}{2}-2k$ for $q = p^k \ge 14$, $p\geq 5$ a prime.  
Our codes are derived from the generating set $H_q = \{(a, a^3) \mid a \in \mathbb{F}_q^*\}$ of the additive group of the finite field $\mathbb{F}_{q^2}$.  
Furthermore, we bridge between 2-quasi-perfect Lee codes constructed by Mesnager, Tang, and Qi and well-known abelian Ramanujan graphs, specifically Li's graphs and finite Euclidean graphs, providing a unified theoretical framework for these families.
\end{abstract}

\maketitle

\section{Introduction}
A {\it Lee code} is an error-correcting code under the {\it Lee distance}.
While finding perfect Lee codes is a classical problem, from a broader theoretical perspective, the focus has shifted to the construction of ``good'' codes, such as {\it quasi-perfect Lee codes}.
Constructing {\it linear} quasi-perfect Lee codes over finite fields represents a fascinating intersection of coding theory, number theory, algebraic geometry, and combinatorics.
In this context, Camarero and Mart\'{i}nez~\cite{CM2016} constructed $2$-quasi-perfect $p$-ary Lee codes of length $2\lfloor \frac{p}{4} \rfloor$ for arbitrary primes $p\equiv \pm 5 \pmod{12}$, which proves the existence of quasi-perfect Lee codes for arbitrarily large length.
For an odd prime $p$, Mesnager, Tang, and Qi~\cite{MTQ2018} formulated an equivalent condition for the existence of $2$-quasi-perfect $p$-ary linear codes of length $n$, and provided several explicit families of $2$-quasi-perfect Lee codes by generalizing the Camarero-Mart\'{i}nez construction and analyzing the solution sets of specific polynomial equations.
Moreover, the Camarero-Mart\'{i}nez and Mesnager-Tang-Qi constructions also produce Cayley graphs over the additive groups of finite fields that are {\it abelian Ramanujan graphs}.
It is worth mentioning that these graphs are optimal pseudorandom graphs in a sense of spectral gap, where their short diameters crucially affect the covering property of the corresponding Lee codes.

Nevertheless, to the best of our knowledge, constructing quasi-perfect Lee codes of arbitrarily large length remains a challenging open problem. A unified theoretical framework that rationalizes why certain algebraic structures yield such codes is notably absent; consequently, existing constructions often appear as isolated families derived from specific polynomials without clear intuitive grounding. Furthermore, while the small diameter of abelian Ramanujan graphs inherently favors the covering radius of associated Lee codes, achieving perfect or quasi-perfect status necessitates stricter structural constraints. Determining precisely which abelian Ramanujan graphs yield these optimal codes remains a significant open question.


In this paper, we first exhibit a new and simple construction for $2$-quasi-perfect linear Lee codes of arbitrarily large length.
Specifically, we investigate the $p$-ary linear code derived from the set $H_q \coloneq \{(a, a^3) \mid a \in \mathbb{F}_q^*\}$, where $p\geq 5$ is a prime and $q=p^k$, and prove that this code is a $2$-quasi-perfect code of length $\frac{q-1}{2}$ and dimension $\frac{q-1}{2}-2k$ provided that $q \geq 14$ (Theorem~\ref{thm-main}).
It is worth mentioning that this construction comes from a recent construction of abelian almost Ramanujan graphs by Forey, Fres\'{a}n, Kowalski and Wigderson~\cite{FFKW2025}, and the structure of the codes and associated Cayley graphs are different from known ones.
Moreover, we bridge the Mesnager-Qi-Tang construction of Lee codes to well-studied abelian Ramanujan graphs in combinatorics and number theory, namely,  {\it Li' s graphs} introduced by Winnie Li (\cite{LK1992}), and {\it finite Euclidean graphs} (\cite{MMST1996}). 
This provides a unified description of the families of Lee codes by the Mesnager-Qi-Tang construction (Propositions~\ref{prop-Li} and \ref{prop-euclid}).

The remainder of the paper is organized as follows. Section~\ref{sect-prelim} reviews preliminaries on finite fields, Lee codes, Cayley and Ramanujan graphs. Section~\ref{sect-main} details our construction and the proof of the $2$-quasi-perfectness employing algebraic geometry.
Section~\ref{sect-MTQ} discusses a connection between the Mesnager-Qi-Tang codes and abelian Ramanujan graphs, namely, Li's graphs and finite Euclidean graphs. 
We conclude this paper with several remarks in Section~\ref{sect-conc}.

\section{Preliminaries}
\label{sect-prelim}

\subsection{Finite fields}
Let $p$ be a prime, and $q=p^k$, where $k\geq 1$ is an integer.
Throughout this paper, $\F_q$ denotes the finite field of order $q$.
Note that $\F_p$ is the residue ring $\mathbb{Z}/(p\mathbb{Z})$, and it
is a subfield of $\F_q$.
The additive group of $\F_q$, denoted by $\F_q^+$, has the structure of the $k$-dimensional vector space over $\F_p$, and accordingly, for each integer $n\geq 1$,  $\F_q^n\coloneq \{(a_1, a_2, \ldots, a_n) \mid a_i \in \F_q, i=1,2,\ldots, n\}$ is a $kn$-dimensional vector space over the same field $\F_p$.
In addition, let $\F_q^*\coloneq \F_q \setminus \{0\}$, which is the multiplicative group of $\F_q$.
For $\ell \geq 1$, the {\it field norm} ${\rm N}_{q^\ell/q}$ from $\F_{q^\ell}^*$ to $\F_q^*$ is a surjective homomorphism defined as ${\rm N}_{q^\ell/q}(z)\coloneq z^{1+q+\cdots+q^{\ell-1}}$.
Also, for $m \geq 1$, let $GL(m,\F_q)$ denote the general linear group of rank $m$ over $\F_q$,  consisting of all invertible $m\times m$ matrices over $\F_q$.  
For further details of finite fields, see, e.g., \cite{LN1983}.

\subsection{Lee codes}
Let $p\geq 3$ be a prime.
A $p$-ary {\it (linear) Lee code} $\mathcal{C}$ of length $n$ is a linear subspace of the vector space $\F_p^n$, where the error correction is formulated via the following {\it Lee distance}: for any $\mathbf{a}, \mathbf{b} \in \F_p^n$,
\begin{align*}
    d_{L}(\mathbf{a}, \mathbf{b})\coloneq \sum_{i=1}^n\min\{|s| \mid s \equiv a_i-b_i \pmod{p},\: s \in \mathbb{Z}\}.
\end{align*}
An element of a linear code $\mathcal{C}$ is called a {\it codeword} of $\mathcal{C}$.
A code $\mathcal{C}$ is said to be {\it $t$-error-correcting} if $t$ is the largest integer such that for any $\mathbf{w} \in \F_p^n$, there exists at most one codeword $\mathbf{c} \in \mathcal{C}$ with $d_L(\mathbf{w},\mathbf{c}) \leq t$, where such $t$ is called the {\it
packing radius} of $\mathcal{C}$. A code $\mathcal{C}$ is said to be {\it $r$-covering} if $r$ is the smallest integer such that for any $\mathbf{w} \in \F_p^n$, there exists at least one codeword $\mathbf{c} \in \mathcal{C}$ with $d_L(\mathbf{w},\mathbf{c}) \leq r$, where such $r$ is called the {\it covering radius} of $\mathcal{C}$
We say a code $\mathcal{C}$ is {\it $t$-perfect} if it is both $t$-error-correcting and $t$-covering.
A famous Golomb-Welch conjecture states that a $t$-perfect code of length $n$ exists only if $t=1$ or $n=1,2$.
Instead, the best alternative is known as a {\it $t$-quasi-perfect} code, where a code $\mathcal{C}$ is said to be {\it $t$-quasi-perfect} if it is both $t$-error-correcting and $(t+1)$-covering.
Even for $t$-quasi-perfect codes with $t\geq 2$, there are few explicit constructions, and it is quite challenging to obtain explicit constructions for $t$-quasi-perfect codes.

\subsection{Cayley graphs and Ramanujan graphs}
For a finite group $G$ with identity $1$, and an inverse-closed subset $S \subset G \setminus \{1\}$ (which means that $s^{-1}\in S$ for any $s \in S$), the {\it Cayley graph} $Cay(G, S)$ on $G$ with respect to $S$ is the graph with vertex set $G$ in which two vertices $g$ and $h$ are adjacent if and only if $gh^{-1} \in S$. 
Note that $Cay(G, S)$ is an undirected $|S|$-regular graph.

The adjacency matrix $A_X$ of a simple $N$-vertex graph $X$ is an $N\times N$ $\{0,1\}$-matrix where the $(u,v)$-entry is $1$ if and only if $u$ and $v$ are adjacent in $X$, and hence, the matrix is symmetric, implying that all eigenvalues are real.
By the Perron-Frobenius theorem, for any $d$-regular graph $X$, 
all eigenvalues of $A_X$ are in the interval $[-d, d]$, where the largest eigenvalue is $d$, and $-d$ is an eigenvalue if and only if $X$ is bipartite. 
We call $\pm d$ trivial eigenvalues, and set $\lambda(X)$ as the largest absolute value of non-trivial eigenvalues of $A_X$.
A $d$-regular $N$-vertex graph $X$ is called a {\it Ramanujan graph} if  $\lambda(X) \leq 2\sqrt{d-1}$.
Similarly, we say $X$ is an {\it almost Ramanujan graph} if $\lambda(X) \leq (2+o(1))\sqrt{d-1}$ as $N \to \infty$
If $d$ is fixed, then it is well-known that $\displaystyle \liminf_{N\to \infty}\lambda(X) \geq 2\sqrt{d-1}$ by the Alon-Boppana bound (e.g.~\cite{DSV2003}), and even if $d$ grows as $N \to \infty$ (say, $\omega(1)=d<\frac{N}{2}$),  one can prove that $\lambda(X) = \Omega(\sqrt{d})$ as $N \to \infty$.
Hence, Ramanujan graphs realize the optimally small non-trivial eigenvalues in absolute value (up to constant).
It is known that (almost) Ramanujan graphs enjoy very rich properties such as high global connectivity, pseudo-randomness, and rapid mixing.
An {\it abelian Ramanujan graph} is a Cayley graph over an abelian group which is a Ramanujan graph.
For a more detailed exposition of Cayley and Ramanujan graphs, see, for instance, \cite{DSV2003}. 

\section{A generic construction}
Recall a generic construction for $2$-quasi-perfect $p$-ary Lee codes by Mesnager, Tang and Qi in \cite{MTQ2018}.

\begin{definition}[\cite{MTQ2018}]
\label{def-generic}
Let $\Gamma$ be a vector space over $\F_p$ which is spanned by a set $H=\{\pm \beta_1, \pm \beta_2, \ldots, \pm \beta_n\}$ of non-zero vectors.
Then define the $p$-ary code $\mathcal{C}(\Gamma, H)$ as 
\begin{align*}
    \mathcal{C}(\Gamma, H)\coloneq \{(c_1, \ldots,c_n) \in \F_p^n\mid c_1\beta_1+\cdots+c_n\beta_n=\mathbf{0}\}.
\end{align*}
\end{definition}

For subsets $A, B$ of a vector space, let $-A \coloneq \{-a \mid a\in A\}$, $A+B \coloneq \{a+b \mid a\in A, b\in B\}$.
For an integer $i\geq 0$, define 
\begin{align*}
    A^{(i)} \coloneq
  \begin{cases}
    \{0\} & \text{if $i=0$,} \\
    A                & \text{if $i=1$,} \\
    A^{(i-1)}+A      & \text{if $i\geq 2$.}
  \end{cases}
\end{align*}

\begin{lemma}[Proposition~6 in \cite{MTQ2018}]
\label{lem-generic}
Let $p\geq 5$ be an odd prime. 
For $\Gamma$ and $H$ satisfying the assumptions of Definition~\ref{def-generic}, we have the following.
\begin{enumerate}
    \item[$(1)$] $\mathcal{C}(\Gamma, H)$ is a $2$-perfect $p$-ary Lee code of length $n$ and dimension $n-\dim_{\F_p}(\Gamma)$ if and only if 
    $\#(\bigcup_{i=0}^{2} H^{(i)})=2n^2+2n+1=\#\Gamma$.
    
    \item[$(2)$] $\mathcal{C}(\Gamma, H)$ is a $2$-quasi-perfect $p$-ary Lee code of length $n$ and dimension $n-\dim_{\F_p}(\Gamma)$ if and only if all of the following hold:
    \begin{itemize}
        \item $\#(\bigcup_{i=0}^{2} H^{(i)})=2n^2+2n+1$,
        \item $\bigcup_{i=0}^{3} H^{(i)}=\Gamma$,
        \item $2n^2+2n+1 < \#\Gamma < \frac{1}{3}(2n+1)(2n^2+2n+3)$.
    \end{itemize}
\end{enumerate}
\end{lemma}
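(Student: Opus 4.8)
The plan is to reduce both claims to a counting problem on the cosets of $\mathcal{C}(\Gamma,H)$ in $\F_p^n$. Consider the $\F_p$-linear ``syndrome'' map $\sigma\colon \F_p^n\to\Gamma$, $\sigma(c_1,\dots,c_n)=\sum_{i=1}^{n}c_i\beta_i$. Since $H$ spans $\Gamma$, the map $\sigma$ is surjective with kernel exactly $\mathcal{C}(\Gamma,H)$; hence $\dim_{\F_p}\mathcal{C}(\Gamma,H)=n-\dim_{\F_p}\Gamma$, and $\sigma$ induces a bijection between the cosets of $\mathcal{C}(\Gamma,H)$ in $\F_p^n$ and the elements of $\Gamma$. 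Because the Lee metric is translation invariant, the minimum Lee weight attained on a coset (its coset-leader weight) is well defined; the covering radius of $\mathcal{C}(\Gamma,H)$ is the maximum coset-leader weight, while the packing radius equals $\lfloor(d-1)/2\rfloor$, where $d$ is the minimum Lee weight of a nonzero codeword --- equivalently, it is the largest $e$ such that the Lee balls of radius $e$ about distinct codewords are pairwise disjoint.

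The heart of the argument is the identity: for every $t\ge0$, the set $\{\sigma(\mathbf{w}) : \mathbf{w}\in\F_p^n,\ \mathrm{wt}_L(\mathbf{w})\le t\}$ equals $\bigcup_{i=0}^{t}H^{(i)}$. For ``$\subseteq$'': if $\mathrm{wt}_L(w_i)=m_i$ then $w_i\equiv\varepsilon_i m_i\pmod{p}$ for a sign $\varepsilon_i$, so $w_i\beta_i$ is a sum of $m_i$ elements of $\{\pm\beta_i\}\subseteq H$ and $\sigma(\mathbf{w})\in H^{(m_1+\cdots+m_n)}$ with $m_1+\cdots+m_n\le t$. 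For ``$\supseteq$'': a sum of $j\le t$ elements of $H$, grouped by index and with each bundle of signed copies of $\beta_i$ replaced by a single coordinate, produces some $\mathbf{w}$ with $\sigma(\mathbf{w})$ equal to that sum and $\mathrm{wt}_L(\mathbf{w})\le j\le t$, since cancellations of opposite signs only lower the weight. This uses $0\in H^{(2)}$, whence $H^{(i)}\subseteq H^{(i+2)}$, which is why the union over all $i\le t$ --- and not $H^{(t)}$ alone --- is the correct object. Two consequences follow at once: the number of cosets containing a word of Lee weight $\le t$ equals $\#(\bigcup_{i=0}^{t}H^{(i)})$; and the covering radius is $\le r$ if and only if $\bigcup_{i=0}^{r}H^{(i)}=\Gamma$.

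Next, write $V_e$ for the number of $\mathbf{w}\in\F_p^n$ with $\mathrm{wt}_L(\mathbf{w})\le e$. Counting by the number of nonzero coordinates and the compositions of $e$ among them gives $V_e=\sum_{j=0}^{e}2^{j}\binom{n}{j}\binom{e}{j}$ whenever $p\ge 2e+1$; in particular $V_2=2n^2+2n+1$ (already for $p\ge5$) and $V_3=\tfrac13(2n+1)(2n^2+2n+3)$ for $p\ge7$, the case $p=5$ requiring a separate check since the radius-$3$ ball has a different size. Since the $V_e$ words of Lee weight $\le e$ lie among the $\#(\bigcup_{i=0}^{e}H^{(i)})$ cosets that contain such a word --- each of those cosets containing at least one --- we always have $\#(\bigcup_{i=0}^{e}H^{(i)})\le V_e$, with equality if and only if each such coset contains exactly one word of Lee weight $\le e$, i.e.\ if and only if the radius-$e$ balls about codewords are pairwise disjoint, i.e.\ if and only if the packing radius is $\ge e$.

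It remains to assemble. For (1), $\mathcal{C}(\Gamma,H)$ is $2$-perfect exactly when the radius-$2$ balls about codewords tile $\F_p^n$, i.e.\ are pairwise disjoint ($\#(\bigcup_{i=0}^{2}H^{(i)})=V_2=2n^2+2n+1$) and cover ($\bigcup_{i=0}^{2}H^{(i)}=\Gamma$); jointly these say $2n^2+2n+1=\#(\bigcup_{i=0}^{2}H^{(i)})=\#\Gamma$, and conversely this equality recovers both conditions. For (2), $\mathcal{C}(\Gamma,H)$ is $2$-quasi-perfect exactly when its packing radius is $2$ and its covering radius is $3$: packing radius $\ge2$ is $\#(\bigcup_{i=0}^{2}H^{(i)})=2n^2+2n+1$; covering radius $\le3$ is $\bigcup_{i=0}^{3}H^{(i)}=\Gamma$; granted these, covering radius $\ne2$ amounts to $\#(\bigcup_{i=0}^{2}H^{(i)})<\#\Gamma$, i.e.\ $2n^2+2n+1<\#\Gamma$, and packing radius $\le2$ --- i.e.\ the radius-$3$ balls are not pairwise disjoint --- amounts to $\#(\bigcup_{i=0}^{3}H^{(i)})<V_3$, i.e.\ $\#\Gamma<\tfrac13(2n+1)(2n^2+2n+3)$; these are precisely the three listed conditions. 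The step I expect to be most delicate is the identity of the second paragraph --- the ``$\supseteq$'' direction and the role of $0\in H^{(2)}$ that makes $\bigcup_{i\le t}H^{(i)}$, rather than $H^{(t)}$, the right set --- together with the exact Lee-ball volumes and their dependence on the characteristic, notably the case $p=5$; the remainder is bookkeeping around the pigeonhole inequality $\#(\bigcup_{i=0}^{e}H^{(i)})\le V_e$.
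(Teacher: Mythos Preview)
The paper does not prove this lemma at all: it is quoted verbatim as Proposition~6 of \cite{MTQ2018} and used as a black box, so there is no in-paper argument to compare against. Your syndrome-map reduction is the standard one and is correct: the identification $\sigma(B_e(\mathbf{0}))=\bigcup_{i\le e}H^{(i)}$ together with the equivalence ``$\sigma$ is injective on $B_e(\mathbf{0})$ $\Leftrightarrow$ packing radius $\ge e$'' (which you justify via $B_e(\mathbf{0})-B_e(\mathbf{0})=B_{2e}(\mathbf{0})$, implicit in your pigeonhole step) gives exactly the three bullet points of part~(2) once one plugs in $V_2$ and $V_3$.

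The one genuine subtlety you already isolate is the case $p=5$ in part~(2): the closed form $V_3=\tfrac13(2n+1)(2n^2+2n+3)$ overcounts by $2n$ when $p=5$ (a single coordinate cannot carry Lee weight~$3$), so the upper inequality $\#\Gamma<\tfrac13(2n+1)(2n^2+2n+3)$ is, strictly speaking, only necessary and not sufficient for ``packing radius $\le 2$'' at $p=5$. This is a feature of the lemma as stated rather than of your argument, and in the paper's application ($\#\Gamma=q^2$ with $q\ge 25$ when $p=5$) both bounds are comfortably satisfied, but you are right to flag it. Apart from that, nothing in your sketch is delicate: the ``$\supseteq$'' direction of the key identity and the role of $0\in H^{(2)}$ are routine once written out, and the remaining bookkeeping (covering radius $\ne 2$ versus $\#\Gamma>V_2$, etc.) is exactly as you describe.
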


\begin{remark}
Consider the Cayley graph $Cay(\Gamma, H)$, where $\Gamma$ is also an abelian group. 
Then $H^{(i)}$ is the set of all vertices to which the distance from the zero-vector $\mathbf{0}$ is $i$. 
Hence, the second condition of Lemma~\ref{lem-generic} (2) shows that the diameter of $Cay(\Gamma, H)$ is $3$.
On the other hand, it is known (e.g. \cite{C1989}) that for any $d$-regular $N$-vertex graph $X$, its diameter is at most $\lceil \frac{\log(N-1)}{\log(d/\lambda(X))} \rceil$, and hence, among all $d$-regular $N$-vertex graphs, $d$-regular $N$-vertex Ramanujan graphs minimize the upper bound for the diameter. 
Thus, for constructing a $2$-quasi-perfect code $\mathcal{C}(\Gamma,H)$, it is hopeful to choose $H$ so that $Cay(\Gamma, H)$ is an abelian Ramanujan graph.
On the other hand, the conditions of Lemma~\ref{lem-generic} (2) strictly restrict the structure of $Cay(\Gamma, H)$, and hence finding suitable $H$ would be quite difficult in general.  
\end{remark}

\section{A new explicit construction of $2$-quasi-perfect Lee codes}
\label{sect-main}
For a prime $p\geq 5$ and a power $q=p^k$ with $k\geq 1$, let $H_q\coloneq \{(a, a^3) \mid a \in \F_q^*\}$.
Notice that as $q$ is odd, $\#H_q$ is even and  $\#H_q=2\cdot \frac{q-1}{2}$.
In addition, $H_q$ is the set of points on the cubic curve $y=x^3$. 

Here is a new explicit $2$-quasi-perfect Lee codes.
\begin{theorem}
\label{thm-main}
Let $p \geq 5$ be a prime. Suppose that a power $q=p^k$ with $k\geq 1$ satisfies $q\geq 14$.
Then, the code $\mathcal{C}(\F_q^2, H_q)$ is a $2$-quasi-perfect $p$-ary Lee code of length $\frac{q-1}{2}$ and dimension $\frac{q-1}{2}-2k$. 
\end{theorem}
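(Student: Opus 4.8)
The plan is to apply Lemma~\ref{lem-generic}~(2) with $\Gamma=\F_q^2$, $H=H_q$, and $n=\frac{q-1}{2}$; note $H_q$ is inverse-closed with $\#H_q=q-1=2n$, that $\dim_{\F_p}\F_q^2=2k$, and that $H_q$ spans $\F_q^2$ over $\F_p$ --- which will follow a posteriori from the second condition below, since $\bigcup_i H_q^{(i)}$ lies in that span. With $2n+1=q$ one has $2n^2+2n+1=\frac{q^2+1}{2}$ and $2n^2+2n+3=\frac{q^2+5}{2}$, so the numerical condition $2n^2+2n+1<\#\Gamma<\frac13(2n+1)(2n^2+2n+3)$ becomes $\frac{q^2+1}{2}<q^2<\frac{q(q^2+5)}{6}$; the left inequality is immediate, and the right one is equivalent to $(q-1)(q-5)>0$, hence holds for all $q\geq 14$.

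For the first condition $\#\bigl(\bigcup_{i=0}^2 H_q^{(i)}\bigr)=2n^2+2n+1$ I would note that, since $p\geq 5$, there are exactly $2n^2+2n+1$ vectors of Lee weight $\leq 2$ in $\F_p^n$, and $\bigcup_{i=0}^2 H_q^{(i)}$ is their image under the syndrome map $\mathbf c\mapsto\sum_i c_i\beta_i$; as every nonzero vector of Lee weight $\leq 4$ is a difference of two vectors of Lee weight $\leq 2$, the asserted equality is equivalent to the absence of a nonzero codeword of Lee weight $\leq 4$. Lifting the coefficients of a putative such codeword to integers of minimal absolute value and separating signs turns it into a nonempty multiset $B\subseteq\F_q^*$ with $\#B\leq 4$, containing no element together with its negative, and satisfying $\sum_{b\in B}b=\sum_{b\in B}b^3=0$. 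The cases $\#B\in\{1,2\}$ are immediate; for $\#B=3$ the identity $b_1^3+b_2^3+b_3^3=3b_1b_2b_3$ (valid since $b_1+b_2+b_3=0$) forces some $b_i=0$; and for $\#B=4$, putting $b_4=-(b_1+b_2+b_3)$ and using $(x+y+z)^3-x^3-y^3-z^3=3(x+y)(y+z)(z+x)$ forces one of $b_1+b_2$, $b_2+b_3$, $b_3+b_1$ to vanish, i.e.\ two elements of $B$ to be antipodal --- a contradiction in every case, using $2\neq 0$ and $3\neq 0$ in $\F_q$.

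The main work is the second condition $\bigcup_{i=0}^3 H_q^{(i)}=\F_q^2$, which I would treat case by case in the target $(u,v)$. If $v=u^3$ then $(u,v)$ is either $(0,0)\in H_q^{(0)}$ or an element of $H_q$; if $v=u^3/9$ with $u\neq 0$ then $(u,v)=3\cdot(u/3,(u/3)^3)\in H_q^{(3)}$. In every remaining case, the identity $a^3+b^3+c^3=(a+b+c)^3-3(a+b)(b+c)(c+a)$ applied with $c=u-a-b$ shows $(u,v)\in H_q^{(3)}$ once the plane cubic $C_{u,v}\colon (a+b)(a-u)(b-u)=\frac13(u^3-v)$ --- which reads $ab(a+b)=-v/3$ when $u=0$ --- has an $\F_q$-point with $a$, $b$, and $u-a-b$ all nonzero. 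A computation of the singular locus shows that $C_{u,v}$ is a smooth projective plane cubic precisely when its right-hand side is neither $0$ nor $\frac{8}{27}u^3$, i.e.\ precisely when $v\notin\{u^3,u^3/9\}$, which are exactly the cases at hand. Hasse-Weil gives $\#C_{u,v}(\F_q)\geq q+1-2\sqrt q$; removing the three points at infinity and then the at most six affine points lying on the lines $a=0$, $b=0$, $a+b=u$ (each such substitution yields a degree-$\leq 2$ equation in the remaining variable; when $u=0$ these three lines meet $C_{0,v}$ nowhere) leaves at least $q-8-2\sqrt q$ admissible points, which is positive once $q\geq 17$. Since $q$ is a power of a prime $\geq 5$, the hypothesis $q\geq 14$ already forces $q\geq 17$, and Lemma~\ref{lem-generic}~(2) then yields that $\mathcal C(\F_q^2,H_q)$ is $2$-quasi-perfect of length $n=\frac{q-1}{2}$ and dimension $n-\dim_{\F_p}\F_q^2=\frac{q-1}{2}-2k$.

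I expect the geometric step to be the main obstacle. One must verify smoothness uniformly over the family $\{C_{u,v}\}$ --- equivalently, pin down the two degenerate values $v\in\{u^3,u^3/9\}$ and observe that they are harmlessly covered by $H_q^{(0)}\cup H_q^{(1)}$ and by the explicit triple $3\cdot(u/3,(u/3)^3)$, respectively --- and then carry out the bookkeeping of the points discarded for lying on a forbidden line, so that the Hasse-Weil estimate still produces a genuine admissible point for every $q$ allowed by the hypothesis. By comparison, the first and third conditions are elementary.
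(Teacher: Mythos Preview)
Your proof is correct and follows the paper's overall strategy --- verify the three conditions of Lemma~\ref{lem-generic}(2), using the Hasse--Weil bound on an auxiliary plane cubic for the covering condition --- but the execution differs in two places worth recording. For the first condition the paper (Lemma~\ref{lem-count}) directly counts $\#H_q^{(2)}$ by showing that, for $a\neq 0$, the system $x+y=a$, $x^3+y^3=b$ determines the unordered pair $\{x,y\}$ uniquely, and then checks $H_q\cap H_q^{(2)}=\emptyset$; you instead recast the condition as the absence of a nonzero codeword of Lee weight $\le 4$ and dispatch the resulting multiset equations with the Newton-type identities $b_1^3+b_2^3+b_3^3=3b_1b_2b_3$ and $(x+y+z)^3-x^3-y^3-z^3=3(x+y)(y+z)(z+x)$. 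This is arguably slicker and makes the role of $p\neq 2,3$ explicit. For the covering condition the paper (Lemma~\ref{lem-abs}) proves absolute irreducibility of the cubic and handles the single reducible fiber $b=a^3$ by splitting it into a line and a conic; you instead prove \emph{smoothness}, which isolates two degenerate fibers $v=u^3$ and $v=u^3/9$, the latter being covered by the explicit triple $3\cdot(u/3,(u/3)^3)$. Your factored form $(a+b)(a-u)(b-u)=\tfrac{1}{3}(u^3-v)$ makes both the singular-locus computation and the intersection counts with the three forbidden lines immediate. The paper additionally claims that the line $x+y=a$ never meets $C_{a,b}$ when $b\neq a^3$, saving two points in the final estimate; you do not use this and simply discard up to six forbidden affine points, which is harmless since $q\ge 14$ together with $p\ge 5$ already forces $q\ge 17$.
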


\begin{remark}
By the Weil sum estimation (e.g.~\cite{LN1983}), $Cay(\F_{q^2}^+, H_q)$ is a $(q-1)$-regular almost Ramanujan graph.
This graph was found by Forey, Fres\'{a}n, Kowalski and Wigderson in \cite{FFKW2025}, and not isomorphic to the abelian Cayley graph associated with known $2$-quasi-perfect Lee codes~\cite{BKS2016,MTQ2018}.
\end{remark}

\begin{remark}
As shown in Theorem~\ref{thm-MTQ2018} (2), Mesnager, Qi and Tang provided a
$2$-quasi-perfect $p$-ary Lee code of length $\frac{q-1}{2}$ and dimension $\frac{q-1}{2}-2k$ for primes $p\equiv 5, 11 \pmod{12}$ and $q=p^k>12$. 
For these $p$ and $q$, the code in Theorem~\ref{thm-main} is not equivalent to their code although both have the same parameters.
Moreover, our codes exhibit new parameters for primes $p\equiv 1, 7 \pmod{12}$.
\end{remark}

To illustrate the distinct algebraic structures of our new codes, we exhibit their parity-check matrices alongside those of the Mesnager-Tang-Qi codes for several small parameters.

\begin{example}
For $p=17 \equiv 5 \pmod{12}$ and $k=1$, the parity check matrix of the code $\mathcal{C}(\F_{17}^2, H_{17})$ is
\begin{align*}
    \begin{pmatrix}
1 & 2 & 3 & 4 & 5 & 6 & 7 & 8 \\
1 & 8 & 10 & 13 & 6 & 12 & 3 & 2
\end{pmatrix}
\end{align*}
while the parity check matrix of the Mesnager-Qi-Tang code in Theorem~\ref{thm-MTQ2018} (2) is  
\begin{align*}
\begin{pmatrix}
1 & 2 & 3 & 4 & 5 & 6 & 7 & 8 \\
1 & 9 & 6 & 13 & 7 & 3 & 5 & 15
\end{pmatrix}.
\end{align*}
\end{example}

\begin{example}
For $p=19 \equiv 7 \pmod{12}$ and $k=1$, the parity check matrix of the code $\mathcal{C}(\F_{19}^2, H_{19})$ is
\begin{align*}
    \begin{pmatrix}
1 & 2 & 3 & 4 & 5 & 6 & 7 & 8 & 9 \\
1 & 8 & 8 & 7 & 11 & 7 & 1 & 18 & 7  
\end{pmatrix}.
\end{align*}
\end{example}

\begin{example}
For $p=23 \equiv 11 \pmod{12}$ and $k=1$, the parity check matrix of the code $\mathcal{C}(\F_{23}^2, H_{23})$ is
\begin{align*}
\begin{pmatrix}
1 & 2 & 3 & 4 & 5 & 6 & 7 & 8 & 9 & 10 & 11 \\
1 & 8 & 4 & 18 & 10 & 9 & 21 & 6 & 16 & 11 & 20
\end{pmatrix},
\end{align*}
while the parity check matrix of the Mesnager-Qi-Tang code in Theorem~\ref{thm-MTQ2018} (2) is  
\begin{align*}
\begin{pmatrix}
1 & 2 & 3 & 4 & 5 & 6 & 7 & 8 & 9 & 10 & 11 \\
1 & 12 & 8 & 6 & 14 & 4 & 10 & 3 & 18 & 7 & 21
\end{pmatrix}.
\end{align*}
\end{example}

\begin{example}
For $p=5$ and $k=2$, the parity check matrix of the code $\mathcal{C}(\F_{5}^{4}, H_{5^2})$ is
\begin{align*}
\left(\begin{array}{rrrrrrrrrrrr}
1 & 3 & 1 & 0 & 3 & 3 & 2 & 1 & 2 & 0 & 1 & 1 \\
0 & 4 & 4 & 1 & 3 & 1 & 0 & 3 & 3 & 2 & 1 & 2 \\
1 & 0 & 2 & 0 & 4 & 0 & 3 & 0 & 1 & 0 & 2 & 0 \\
0 & 1 & 0 & 2 & 0 & 4 & 0 & 3 & 0 & 1 & 0 & 2
\end{array}\right).
\end{align*}
Here, we express $\F_{25}$ as $\F_5[\sqrt{2}]$ by taking the irreducible polynomial $x^2-2$, and choose $3+4\sqrt{2}$ as a primitive element of $\F_{25}$.
\end{example}

The proof of Theorem~\ref{thm-main} is based on careful verifications for the equivalent condition in Lemma~\ref{lem-generic} (2), which are completed by the following series of technical lemmas.

\begin{lemma}
\label{lem-count}
For any prime $p \geq 5$ and power $q=p^k$ with $k\geq 1$, 
we have 
\begin{align*}
    \#\bigcup_{i=0}^{2} H_q^{(i)}=2\cdot \Bigl(\frac{q-1}{2} \Bigr)^2+2\cdot \Big(\frac{q-1}{2} \Big)+ 1
\end{align*}
    
\end{lemma}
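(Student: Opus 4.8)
The plan is to compute $\#\bigcup_{i=0}^{2} H_q^{(i)}$ directly by understanding the three pieces $H_q^{(0)}=\{\mathbf 0\}$, $H_q^{(1)}=H_q$, and $H_q^{(2)}=H_q+H_q$, and by controlling the overlaps between them. Since $H_q^{(0)}$ has size $1$ and $H_q^{(1)}=H_q$ has size $q-1$ (the map $a\mapsto(a,a^3)$ being injective on $\F_q^*$), the whole count reduces to determining $\#(H_q+H_q)$ and then adjusting for how many elements of $H_q+H_q$ coincide with $\mathbf 0$ or lie in $H_q$. The target value $2\left(\frac{q-1}{2}\right)^2+2\left(\frac{q-1}{2}\right)+1 = \frac{(q-1)^2}{2}+(q-1)+1$ is exactly what one gets if $H_q\cup\{\mathbf 0\}$ behaves like a Sidon-type set: namely if $\bigcup_{i=0}^2 H_q^{(i)}$ has size $1+(q-1)+\binom{q-1}{2}+ \#\{\text{``doubled'' sums }2(a,a^3)\}$ with no further collisions; I would first expand $\frac{(q-1)^2}{2}+(q-1)+1 = 1+(q-1)+\binom{q-1}{2}+\frac{q-1}{2}$ and read off that the count is consistent with: all pairwise sums $(a,a^3)+(b,b^3)$ with $a\ne b$ being distinct (contributing $\binom{q-1}{2}$), all ``doubled'' elements $2(a,a^3)$ being distinct from each other and from everything else (contributing $\frac{q-1}{2}$, since $a$ and $-a$ give negatives of each other and there are $\frac{q-1}{2}$ such $\pm$-pairs... wait, more carefully $2(a,a^3)$ ranges over $q-1$ values as $a$ ranges over $\F_q^*$, so I should recount), and $H_q\cup\{\mathbf 0\}$ being disjoint from the set of genuine two-term sums. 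So the real content is two injectivity/disjointness claims.

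The key steps, in order: (1) Show the ``sum map'' is as injective as possible, i.e. if $(a,a^3)+(b,b^3) = (c,c^3)+(d,d^3)$ with $\{a,b\},\{c,d\}$ multisets from $\F_q^*$, then $\{a,b\}=\{c,d\}$. Writing $a+b=c+d=:s$ and $a^3+b^3=c^3+d^3$, use $a^3+b^3=(a+b)^3-3ab(a+b) = s^3-3abs$, so when $s\ne 0$ this forces $ab=cd$, hence $\{a,b\}$ and $\{c,d\}$ are roots of the same quadratic $t^2-st+ab$. The case $s=0$ (so $b=-a$, $d=-c$) gives sum $(0,0)=\mathbf 0$, so all such ``antipodal'' pairs collapse to a single point, namely $\mathbf 0\in H_q^{(0)}$ — this is the one systematic collision and it is already accounted for. (2) Show $(H_q+H_q)\setminus\{\mathbf 0\}$ is disjoint from $H_q^{(0)}\cup H_q^{(1)}$: disjointness from $\{\mathbf 0\}$ is by construction, and disjointness from $H_q$ means showing $(a,a^3)+(b,b^3) \ne (e,e^3)$ for $a,b\in\F_q^*$, i.e. that no point of the cubic is a sum of two other points; with $a+b=e$ this becomes $a^3+b^3 = (a+b)^3$, i.e. $3ab(a+b)=0$, which since $p\ge 5$ and $a,b\ne 0$ forces $a+b=0$, contradiction (or rather landing back at $\mathbf 0$). (3) Assemble the count: $\#(H_q+H_q)$ decomposes into sums with $a=b$ (giving $2(a,a^3)$ for $a\in\F_q^*$, all distinct, $q-1$ of them) and sums with $a\ne b$; among the latter, those with $b=-a$ all equal $\mathbf 0$, and the remaining ones (with $a+b\ne 0$, $a\ne b$) are in bijection with unordered pairs $\{a,b\}$ by step (1), but I must also check a doubled element $2(a,a^3)$ never equals a genuine distinct-pair sum — this follows from step (1) too by treating $2(a,a^3)$ as the sum from the multiset $\{a,a\}$. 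Then carefully count the unordered pairs $\{a,b\}$, $a\ne b$, with $a+b\ne 0$: total unordered pairs is $\binom{q-1}{2}$, minus the $\frac{q-1}{2}$ pairs $\{a,-a\}$; and for the doubled elements, note $2(a,a^3)$ for $a$ and for $-a$ are negatives, hence distinct (as $2(a,a^3)=(2a,2a^3)$ with $2a\ne 0$), so all $q-1$ survive. Summing: $1 + (q-1) + \left[\binom{q-1}{2} - \frac{q-1}{2}\right] + (q-1)$ and verify this equals $\frac{(q-1)^2}{2}+(q-1)+1$; indeed $\binom{q-1}{2}-\frac{q-1}{2} = \frac{(q-1)(q-2)}{2}-\frac{q-1}{2} = \frac{(q-1)(q-3)}{2}$, and $1+(q-1)+\frac{(q-1)(q-3)}{2}+(q-1) = 1 + \frac{(q-1)(q-3)}{2} + 2(q-1) = 1 + \frac{(q-1)(q-3)+4(q-1)}{2} = 1 + \frac{(q-1)(q+1)}{2} = 1+\frac{q^2-1}{2}$, which matches $\frac{(q-1)^2}{2}+(q-1)+1 = \frac{q^2-2q+1+2q-2}{2}+1 = \frac{q^2-1}{2}+1$. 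Good.

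The main obstacle I anticipate is bookkeeping the overlaps correctly rather than any deep mathematics: the algebraic heart — reducing $a^3+b^3=c^3+d^3$ under $a+b=c+d$ to $ab=cd$ via the identity $a^3+b^3=(a+b)^3-3ab(a+b)$, valid since $\mathrm{char}\,\F_q=p\ge 5$ — is short, and likewise the ``no point of the cubic is a sum of two others'' claim collapses to $3ab(a+b)=0$. The delicate part is correctly partitioning $H_q+H_q$ into (i) $\mathbf 0$ coming from antipodal pairs, (ii) doubled points $2(a,a^3)$, and (iii) genuine distinct-pair sums, verifying these three classes are pairwise disjoint (using step (1) in the multiset formulation to separate (ii) and (iii), and step (2) to confirm nothing in (ii) or (iii) lands in $H_q\cup\{\mathbf 0\}$ except the designated $\mathbf 0$), and then matching the arithmetic to the claimed closed form. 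No hypothesis beyond $p\ge 5$ is needed for this lemma, consistent with its statement; the bound $q\ge 14$ will only enter later when verifying the covering-radius inequalities of Lemma~\ref{lem-generic}(2).
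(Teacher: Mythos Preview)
Your plan is correct and follows essentially the same route as the paper: recover the multiset $\{x,y\}$ from $(x+y,\,x^3+y^3)$ via the identity $x^3+y^3=(x+y)^3-3xy(x+y)$ (so $xy=\frac{(x+y)^3-(x^3+y^3)}{3(x+y)}$ when $x+y\neq 0$), show $H_q\cap H_q^{(2)}=\emptyset$ from $3xy(x+y)=0$, and count. The only cosmetic difference is that the paper counts multisets in one stroke using combinations with repetition ${}_{q-1}H_2=\binom{q}{2}$, whereas you split into the $\binom{q-1}{2}$ distinct-pair sums plus the $q-1$ doubled points; these are the same decomposition and lead to the same arithmetic.
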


\begin{proof}
First we show that $\#H_q^{(2)}=\frac{(q-1)^2}{2}+1$.
It suffices to count $(a, b)\in \F_q^2$ such that there are $x, y\in \F_q^*$ with
\begin{align}
\label{eq-count}
  \begin{cases}
    a=x+y  \\
    b=x^3+y^3
  \end{cases}
\end{align}
Note that $a=0$ if and only if $y=-x$, equivalent to $b=0$, implying that there are exactly $\frac{q-1}{2}$ unordered pair $\{x, y\}$ satisfying \eqref{eq-count} with $(a, b)=(0, 0)$. 
If $a\neq 0$, then $(a, b)$ corresponds to an unordered pair $\{x, y\}$ satisfying \eqref{eq-count} and $x+y\neq 0$ by a one-to-one mapping. 
Indeed, by \eqref{eq-count}, we have $xy=\frac{a^3-b}{3a}$, and hence $x, y$ must be roots of the quadratic equation $t^2-at+\frac{a^3-b}{3a}=0$. 
The total number of unordered pairs $\{x,y\}$ from $\mathbb{F}_q^*$ (allowing $x=y$) is equivalent to choosing 2 elements from $q-1$ elements with replacement, which is $\binom{(q-1)+2-1}{2} = \binom{q}{2} = \frac{q(q-1)}{2}$.
Thus, the number of unordered pairs with $x+y \neq 0$ is:
\begin{align*}
    \frac{q(q-1)}{2} - \frac{q-1}{2} = \frac{(q-1)^2}{2}.
\end{align*}
Hence, we conclude that $\#H_q^{(2)} = \frac{(q-1)^2}{2}+1$ by adding $(0,0)$.
The proof completes after showing that $H_q \cap H_q^{(2)}=\emptyset$ as $H_q^{(0)} \subset H_q^{(2)}$ and $\#H_q^{(1)}=\#H_q=q-1$.
Suppose that there exists $(a, a^3) \in H_q$ such that there are $x, y \in \F_q^*$ with 
\begin{align*}
  \begin{cases}
    a=x+y  \\
    a^3=x^3+y^3
  \end{cases}
\end{align*}
This implies that $3axy=0$, which is impossible as $x, y, a \neq 0$ and $p\geq 5$.
\end{proof}

\begin{lemma}
\label{lem-abs}
Let $p \geq 5$ be a prime. Suppose that a power $q=p^k$ with $k\geq 1$. 
Then the following polynomial $F_{a,b}(x,y)$ over $\F_q$ is absolutely irreducible unless $b=a^3$.
    \begin{align*}
   F_{a,b}(x,y) \coloneq xy(x+y) &-a(x+y)^2+a^2(x+y)+\frac{b-a^3}{3}
\end{align*}
If $b=a^3$, then for any $a\in \F_q$, we have
\begin{align*}
    F_{a, a^3}(x,y)=(x+y)(x-a)(y-a).
\end{align*}
\end{lemma}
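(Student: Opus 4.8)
The plan is to handle the two assertions separately, dealing with the easy factorization first and then proving absolute irreducibility by ruling out every possible linear factor.

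When $b=a^3$ the claimed identity is a direct expansion: multiplying out $(x+y-a)(xy+ax+ay)$ and collecting terms gives $x^2y+xy^2+a(x^2+xy+y^2)-a^2(x+y)$, which is exactly $F_{a,a^3}$ since the constant term $\tfrac{a^3-b}{3}$ vanishes. This is routine and I would dispose of it in one line.

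For the main statement, assume $b\neq a^3$ and argue by contradiction that $F_{a,b}$ is reducible over $\overline{\F_q}$. The structural observation that drives everything is that $F_{a,b}$ has degree exactly $3$ — here $\tfrac13$ is meaningful because $p\geq 5>3$ — and that its degree-$3$ homogeneous part is $xy(x+y)$, a product of the three \emph{pairwise non-proportional} linear forms $x$, $y$, $x+y$. Since a reducible cubic must have a factor of degree $1$, write $F_{a,b}=G\cdot H$ with $G=\lambda x+\mu y+\nu$, $(\lambda,\mu)\neq(0,0)$, and $\deg H=2$; comparing the degree-$3$ parts forces $\lambda x+\mu y$ to divide $xy(x+y)$, hence to be a scalar multiple of $x$, $y$, or $x+y$. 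After rescaling $G$ and using that $F_{a,b}$ is symmetric under $x\leftrightarrow y$, it suffices to rule out $G=x-\alpha$ and $G=x+y-\gamma$.

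Each remaining case is a short substitution. If $x-\alpha\mid F_{a,b}$ then $F_{a,b}(\alpha,y)\equiv 0$ in $\overline{\F_q}[y]$; collecting powers of $y$ gives $\alpha+a=0$, $\alpha^2+a\alpha-a^2=0$, and $a\alpha^2-a^2\alpha+\tfrac{a^3-b}{3}=0$, which chain to $\alpha=-a$, then $a=0$, then $b=0=a^3$, contradicting $b\neq a^3$. If $x+y-\gamma\mid F_{a,b}$ then substituting $y=\gamma-x$ (and writing $x^2y+xy^2=xy(x+y)$, $x^2+xy+y^2=(x+y)^2-xy$) produces a polynomial in $x$ whose coefficients are $a-\gamma$, $\gamma^2-a\gamma$, and $a\gamma^2-a^2\gamma+\tfrac{a^3-b}{3}$; vanishing of the first gives $\gamma=a$, and the last then collapses to $\tfrac{a^3-b}{3}=0$, i.e.\ $b=a^3$, again a contradiction. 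Hence $F_{a,b}$ has no linear factor over $\overline{\F_q}$ and is therefore absolutely irreducible. I do not expect a genuine obstacle: the one point that must be gotten right is the observation that the cubic part splits into three distinct lines, since that is precisely what pins down the shape of any hypothetical linear factor; after that the argument reduces to bookkeeping in three tiny systems of polynomial identities in $a,b,\alpha,\gamma$.
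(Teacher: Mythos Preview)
Your proposal is correct and follows essentially the same approach as the paper: both arguments observe that the cubic homogeneous part $xy(x+y)$ forces any linear factor to be (up to scaling) $x-c$, $y-c$, or $x+y-c$, and then substitute to derive $b=a^3$ in each case. Your write-up is slightly more explicit about why the homogeneous part pins down the linear factor, and you use the $x\leftrightarrow y$ symmetry to merge two of the three cases, but the substance is the same.
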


\begin{proof}
As the statement for the case $b=a^3$ is trivial, 
suppose $b \neq a^3$ and $F_{a,b}$ is not absolutely irreducible.
Then, $F_{a,b}$ must have a linear factor which is either $x+y-c$, $x-c$ or $y-c$ for some constant $c$. 
If $x+y-c$ is a factor of $F_{a,b}$, then substituting $x+y=c$ into $F_{a,b}(x,y)=0$ yields that $cxy-ac^2+a^2c+\frac{b-a^3}{3}=0$ holds for any $x$ and $y$. Hence we have $b=a^3$, a contradiction.
If $x-c$ is a factor of $F_{a,b}$, then it holds for any $y$ that
$cy(c+y)-a(c+y)^2+a^2(c+y)+\frac{b-a^3}{3}=0$.
Hence we have $c=a$ and $b=a^3$, which cannot occur by the assumption.
By the same discussion, $y-c$ cannot be a factor of $F_{a,b}$, completing the proof.
\end{proof}

\begin{lemma}
\label{lem-covering}
Let $p \geq 5$ be a prime. Suppose that a power $q=p^k$ with $k\geq 1$ satisfies $q\geq 14$.
Then we have 
\begin{align*}
    \bigcup_{i=0}^{3} H_q^{(i)}=\Gamma
\end{align*}
\end{lemma}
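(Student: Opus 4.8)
The plan is to establish the last condition of Lemma~\ref{lem-generic}~(2), namely $\bigcup_{i=0}^{3}H_q^{(i)}=\F_q^{2}$; the first condition is exactly Lemma~\ref{lem-count}, and with $n=\tfrac{q-1}{2}$ and $\#\F_q^{2}=q^{2}$ the size sandwich $2n^{2}+2n+1<q^{2}<\tfrac13(2n+1)(2n^{2}+2n+3)$ collapses to $(q-1)(q-5)>0$, which holds for $q\ge 14$. So it suffices to prove that every $(a,b)\in\F_q^{2}$ lying outside $H_q^{(0)}\cup H_q^{(1)}\cup H_q^{(2)}$ belongs to $H_q^{(3)}$. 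Since $(0,0)\in H_q^{(2)}$ by Lemma~\ref{lem-count}, such a pair satisfies $(a,b)\neq(0,0)$ and $(a,b)\notin H_q$, whence $b\neq a^{3}$ --- precisely the regime in which Lemma~\ref{lem-abs} applies, which is no coincidence.

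The first step is to translate ``$(a,b)\in H_q^{(3)}$'' into the existence of an $\F_q$-point on a plane cubic. Writing $(a,b)=(\xi_1,\xi_1^{3})+(\xi_2,\xi_2^{3})+(\xi_3,\xi_3^{3})$ with $\xi_i\in\F_q^{*}$ means $\xi_1+\xi_2+\xi_3=a$ and $\xi_1^{3}+\xi_2^{3}+\xi_3^{3}=b$; using $\xi_1^{3}+\xi_2^{3}+\xi_3^{3}=e_1^{3}-3e_1e_2+3e_3$ with $e_1=a$ and eliminating $\xi_3=a-\xi_1-\xi_2$, the two equations become the single cubic equation $(\xi_1+\xi_2)(a-\xi_1)(a-\xi_2)=\tfrac{a^{3}-b}{3}$. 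Let $C_{a,b}$ be this affine plane cubic (a close variant of $\{F_{a,b}=0\}$: exactly as in the proof of Lemma~\ref{lem-abs} one checks that $C_{a,b}$ is absolutely irreducible precisely when $b\neq a^{3}$, since its leading form $xy(x+y)$ forces any linear factor to have the shape $x-c$, $y-c$ or $x+y-c$, and each such factor would make a quadratic in one variable constant). We then get the clean dictionary: $(a,b)\in H_q^{(3)}$ if and only if $C_{a,b}$ has an $\F_q$-point off the three exceptional lines $\xi_1=0$, $\xi_2=0$, $\xi_1+\xi_2=a$ (which encode $\xi_1,\xi_2,\xi_3\neq 0$). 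Each of these lines passes through one of the three points at infinity $[0{:}1{:}0]$, $[1{:}0{:}0]$, $[1{:}{-}1{:}0]$ of $C_{a,b}$ (the zeros of $xy(x+y)$), and hence by B\'{e}zout --- the cubic being irreducible, no line is a component --- meets the affine curve $C_{a,b}$ in at most two points; moreover (since $b\neq a^{3}$) these lines share no point of $C_{a,b}$, so at most six $\F_q$-points of $C_{a,b}$ are excluded.

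It remains to count. Since $b\neq a^{3}$, the projective closure $\overline{C}_{a,b}\subset\mathbb{P}^{2}$ is a geometrically irreducible cubic, hence either a smooth genus-one curve or a rational curve with a single node or cusp; in every case $\#\overline{C}_{a,b}(\F_q)\ge q+1-2\sqrt q$ by the Hasse--Weil bound (the singular alternative only helping, as it gives $\ge q$ points). Subtracting the three $\F_q$-rational points at infinity and the at most six points on the exceptional lines leaves at least $q+1-2\sqrt q-3-6=q-8-2\sqrt q=(\sqrt q-4)(\sqrt q+2)$ admissible $\F_q$-points of $C_{a,b}$, which is positive whenever $\sqrt q>4$. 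Every prime power $q=p^{k}\ge 14$ with $p\ge 5$ satisfies $q\ge 17$, so an admissible point always exists; it yields $\xi_1,\xi_2,\xi_3\in\F_q^{*}$ with $(a,b)=\sum_i(\xi_i,\xi_i^{3})$, proving $(a,b)\in H_q^{(3)}$ and hence $\bigcup_{i=0}^{3}H_q^{(i)}=\F_q^{2}$.

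The step I expect to be most delicate is the quantitative bookkeeping: the inequality $q+1-2\sqrt q-\#(\text{excluded points})>0$ leaves essentially no room for $q\in\{17,19,23,25\}$, so the points at infinity and the intersections of $C_{a,b}$ with the three exceptional lines must be counted exactly rather than through a lossy B\'{e}zout estimate (one may also use the integer refinement $\#\overline{C}_{a,b}(\F_q)\ge q+1-\lfloor 2\sqrt q\rfloor$), and one must confirm that the possibly singular $\overline{C}_{a,b}$ causes no trouble. Should a uniform argument still just miss the smallest values, I would settle those few cases ($q=17,19$, and if necessary $23$, $25$) by direct computation of $\bigcup_{i=0}^{3}H_q^{(i)}$, so the overall strategy is robust.
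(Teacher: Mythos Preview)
Your argument is correct and follows the same route as the paper's: reduce membership in $H_q^{(3)}$ to the existence of an $\F_q$-point on an absolutely irreducible plane cubic avoiding three lines, invoke Hasse--Weil, and count. The one difference worth noting is that you sidestep the case $b=a^{3}$ entirely by observing that every $(a,a^{3})$ already lies in $H_q\cup\{(0,0)\}\subset\bigcup_{i\le 2}H_q^{(i)}$, whereas the paper handles this case separately via the factorization of Lemma~\ref{lem-abs} into a line and a hyperbola. Your exclusion count is accordingly a little looser (six affine points on the three exceptional lines rather than the paper's four, since the paper's curve does not meet $x+y=a$ at all when $b\neq a^{3}$), which forces $q\ge 17$ rather than $q\ge 14$; but as you correctly observe, the hypotheses $p\ge 5$ and $q\ge 14$ already imply $q\ge 17$, so nothing is lost and the hedging in your final paragraph is unnecessary.
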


\begin{proof}
As $\{(0,0)\}=H_q^{(0)}$,
it suffices to prove that for any $(a, b) \in \F_q^2 \setminus \{(0,0)\}$, there are $x, y, z \in \F_q^*$ such that 

\begin{align*}
  \begin{cases}
    a=x+y+z  \\
    b=x^3+y^3+z^3
  \end{cases}
\end{align*}

Then, by substituting $z=a-x-y$ into the second equation,
we have the cubic curve $C_{a,b}:F_{a,b}(x,y)=0$ of genus one.
%
Hence, desired elements $x, y, z \in \F_q^*$ exist if there exists a point $(x,y)$ on the curve $C_{a,b}$ with $x, y \neq 0$ and $x+y\neq a$.

If $b \neq a^3$, then $F_{a,b}$ is absolutely irreducible by Lemma~\ref{lem-abs}. 
In addition, on the projective space $\mathbb{P}^2_{\F_q}$ of dimension $2$, the projective points $(X:Y:Z)$ with $Z=0$ are $(1:0:0), (0:1:0)$ and $(1:-1:0)$ which must be excluded.
Thus, by the Hasse-Weil bound (e.g. \cite{S1986}),
the number of points of $C_{a,b}$ is at least $q-2\sqrt{q}-2$.
Note that $C_{a,b}$ cannot intersect with the line $x+y=a$ if $b\neq a^3$ by the proof of Lemma~\ref{lem-abs}, and each of the lines $x=0$ and $y=0$ intersects with $C_{a,b}$ at no more than two points.
Hence, if $b \neq a^3$, there exists a point $(x,y)$ with $x, y \neq 0$ and $x+y\neq a$ if $(q-2\sqrt{q}-2)-2\cdot2>0$, which holds as $q\geq 14$.

If $b=a^3$, as shown in Lemma~\ref{lem-abs}, the curve $C_{a,b}$ contains the line $x+y=0$. 
Since $a\neq 0$ holds by the assumption, all points $(t,-t)$, $t \in \F_q^*$, on $C_{a,b}$ are desired ones.

These complete the proof.
\end{proof}

\begin{remark}
For characteristic $2$, a result analogous to Lemma~\ref{lem-covering} has been established in \cite[Theorem~10.2.3]{CHLL1997}. While their proof is purely combinatorial, our approach for characteristic $p \geq 5$ crucially relies on techniques from algebraic geometry.
\end{remark}

Now we are ready to prove Theorem~\ref{thm-main}.

\begin{proof}[Proof of Theorem~\ref{thm-main}]
Notice that $\F_q^2$ is a vector space over $\F_p$ with $\dim_{\F_p}(\F_q^2)=2k$. 
Also, $H_q$ clearly spans $\F_q^2$ by Lemma~\ref{lem-covering}.
By taking $n=\frac{q-1}{2}$,  
all the conditions of Lemma~\ref{def-generic} are satisfied by Lemmas~\ref{lem-count} and \ref{lem-covering}, 
completing the proof.   
\end{proof}

\section{The Mesnager-Tang-Qi codes and abelian Ramanujan graphs}
\label{sect-MTQ}
Now recall the explicit $2$-quasi-perfect Lee codes by Mesnager, Tang and Qi in \cite{MTQ2018}. 
For an odd prime $p$, let $q=p^k$ where $k$ is an integer.
Then, the field extension $\F_{q^2}$ of $\F_q$ is expressed as  $\F_{q^2}=\F_q[\sqrt{\delta}]=\{x+y\sqrt{\delta} \mid x, y \in \F_q\}$ where $\delta \in \F_q^*$ is a non-square.
For $z=x+y\sqrt{\delta} \in \F_{q^2}$, let $\overline{z}\coloneq x-y\sqrt{\delta}$ and
$\mathcal{N}(z)\coloneq z\cdot \overline{z}=x^2-\delta y^2$.
Then define 
\begin{align*}
    &H_{+}\coloneq \{z \in \F_{q^2} \mid \mathcal{N}(x+y\sqrt\delta)=1 \},\\
    &H_{-}\coloneq \{(x,y) \in \F_{q}^2 \mid xy=1\}.
\end{align*}
Clearly, both $H_{+}$ and $H_{-}$ are subsets of $\F_q^2$.

\begin{theorem}[Theorem~8 in \cite{MTQ2018}]
\label{thm-MTQ2018}
For an odd prime $p$, let $q=p^k$ where $k$ is an integer.
Then we have the following.
\begin{enumerate}
    \item[$(1)$] If either $p\equiv 1, 7 \pmod{12}$, or $p\geq 5$ and $k$ is even, then the code $\mathcal{C}(\F_q^2, H_{+})$ is a $2$-quasi-perfect $p$-ary code of length $\frac{q+1}{2}$ and dimension $\frac{q+1}{2}-2k$.

     \item[$(2)$] If $p\equiv 5, 11 \pmod{12}$ and $q>12$, then the code $\mathcal{C}(\F_q^2, H_{-})$ is a $2$-quasi-perfect $p$-ary code of length $\frac{q-1}{2}$ and dimension $\frac{q-1}{2}-2k$.
\end{enumerate}

\end{theorem}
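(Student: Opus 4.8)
The plan is to verify, for each of the two subsets $H\in\{H_+,H_-\}$ separately, the three conditions of Lemma~\ref{lem-generic}(2) with $\Gamma=\F_q^2$, following exactly the template used to prove Theorem~\ref{thm-main}. First I would set $n=\tfrac{q+1}{2}$ when $H=H_+$ and $n=\tfrac{q-1}{2}$ when $H=H_-$; in each case $H$ is symmetric under negation (for $H_+$ since $\mathcal N(-z)=\mathcal N(z)$, for $H_-$ since $(-x,-x^{-1})$ lies on $xy=1$), consists of $2n$ nonzero vectors, and hence can be written as $\{\pm\beta_1,\dots,\pm\beta_n\}$; that $H$ spans $\F_q^2$ over $\F_p$ will come for free once the covering step below is done, so Definition~\ref{def-generic} applies and the code $\mathcal C(\F_q^2,H)$ has length $n$ and dimension $n-\dim_{\F_p}(\F_q^2)=n-2k$. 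Since $\#\Gamma=q^2$, the double inequality $2n^2+2n+1<\#\Gamma<\tfrac13(2n+1)(2n^2+2n+3)$ collapses to an elementary polynomial inequality in $q$: for $H_-$ the lower inequality is trivial and the upper one reduces to $(q-1)(q-5)>0$, while for $H_+$ the upper inequality is automatic and the lower one reduces to $(q-5)(q+1)>0$. Both hold under the stated hypotheses, in particular because $q>12$ in part~(2).

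Next I would establish the packing identity $\#\bigl(\bigcup_{i=0}^2 H^{(i)}\bigr)=2n^2+2n+1$ exactly as in Lemma~\ref{lem-count}, by showing $\mathbf 0\in H^{(2)}$, $H\cap H^{(2)}=\emptyset$, and $\#H^{(2)}=2n^2+1$, so that the union has size $\#H^{(2)}+\#H^{(1)}=(2n^2+1)+2n$. The count $\#H^{(2)}=2n^2+1$ comes from a resolvent-quadratic bijection: of the $\binom{2n+1}{2}=2n^2+n$ unordered pairs $\{\beta,\gamma\}\subseteq H$, the $n$ antipodal pairs $\{\beta,-\beta\}$ all sum to $\mathbf 0$, while each of the remaining $2n^2$ pairs is recovered from its (necessarily nonzero) sum by solving a quadratic. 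Concretely, for $H_-$ the pair $\{(x,x^{-1}),(y,y^{-1})\}$ with $x+y\ne 0$ maps to $(a,b)=(x+y,\,x^{-1}+y^{-1})$, from which $xy=a/b$ and hence $\{x,y\}$ are the roots of $T^2-aT+a/b$; for $H_+$ the pair $\{z_1,z_2\}$ maps to $u=z_1+z_2\ne\mathbf 0$, from which $z_1z_2=u/\overline u$ and hence $\{z_1,z_2\}$ are determined; so these $2n^2$ sums are pairwise distinct and nonzero. The congruence hypothesis enters precisely through $H\cap H^{(2)}=\emptyset$: an equality $\beta=\gamma_1+\gamma_2$ inside $H_-$ forces $x/y+y/x=-1$, that is, a primitive cube root of unity in $\F_q$; the analogous computation inside $H_+$ forces a primitive cube root of unity in the norm-one group, that is $3\mid q+1$. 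These are excluded respectively by $q\not\equiv 1\pmod 3$ (guaranteed by $p\equiv 5,11\pmod{12}$ in part~(2)) and by $q\equiv 1\pmod 3$ (the content of the hypothesis of part~(1)).

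The remaining and hardest step is the covering condition $\bigcup_{i=0}^3 H^{(i)}=\F_q^2$, meaning $Cay(\F_q^2,H)$ has diameter $3$; here I would mimic Lemmas~\ref{lem-abs} and~\ref{lem-covering}. It suffices to show that every $(a,b)\in\F_q^2\setminus\{\mathbf 0\}$ is a sum of three elements of $H$: for $H_-$, that there exist $x,y,z\in\F_q^*$ with $x+y+z=a$ and $x^{-1}+y^{-1}+z^{-1}=b$; for $H_+$, that the corresponding element is a sum of three points of the norm conic $\mathcal N(z)=1$. Eliminating one variable turns the $H_-$ problem into finding a point with nonzero coordinates, lying off a forbidden line, on an explicit plane cubic $G_{a,b}(x,y)=0$. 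I would prove $G_{a,b}$ is absolutely irreducible of genus at most one outside a short explicit list of exceptional $(a,b)$ (the analogue of Lemma~\ref{lem-abs}, where the cubic splits as a line plus a conic), dispose of those $(a,b)$ directly, and for the rest invoke the Hasse--Weil bound to produce at least $q-2\sqrt q-O(1)$ affine $\F_q$-points, of which only $O(1)$ can lie on the coordinate axes, on the forbidden line, or at infinity; this leaves an admissible point once $q$ exceeds an absolute constant, which is the source of the threshold $q>12$. The $H_+$ case runs in parallel: rationally parametrising the conic reduces it either to a point count on a low-degree curve or, more cleanly, to a Weil (quadratic-character or Gauss) sum over the circle, effective for $q$ above an absolute constant, with the finitely many small cases handled by hand. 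I expect the genuine obstacle to be twofold: the absolute-irreducibility analysis together with the determination of the exceptional $(a,b)$, and making the Hasse--Weil/Weil estimate effective enough to defeat the forbidden-point count down to the precise threshold in the statement --- the same difficulty encountered for the curve $y=x^3$ in the proof of Theorem~\ref{thm-main}, now transplanted to the curves attached to $xy=1$ and $\mathcal N(z)=1$, which are exactly the ones underlying Li's graphs and finite Euclidean graphs.
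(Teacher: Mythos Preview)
The paper does not prove this theorem: it is quoted verbatim as Theorem~8 of \cite{MTQ2018} and used only as background for Section~\ref{sect-MTQ}. There is therefore no ``paper's own proof'' to compare your proposal against.

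That said, your plan is a faithful transplantation of the paper's proof of Theorem~\ref{thm-main} (Lemmas~\ref{lem-count}--\ref{lem-covering}) to the curves $xy=1$ and $\mathcal N(z)=1$, and it is essentially the strategy of \cite{MTQ2018} as well: verify the three conditions of Lemma~\ref{lem-generic}(2) by (i) a resolvent-quadratic bijection for the packing count, (ii) a cube-root-of-unity obstruction for $H\cap H^{(2)}=\emptyset$, and (iii) a Hasse--Weil/character-sum estimate for the covering condition. Your identification of where the congruence hypotheses enter is correct: for $H_-$ one needs $-3$ to be a nonsquare in $\F_q$ (equivalently $q\not\equiv 1\pmod 3$), and for $H_+$ one needs $3\nmid q+1$ (equivalently $q\equiv 1\pmod 3$), and the hypotheses in parts~(1) and~(2) deliver exactly these. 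One caution on part~(2): as stated, the hypothesis $p\equiv 5,11\pmod{12}$ does not by itself force $q\not\equiv 1\pmod 3$ when $k$ is even, so your packing argument for $H_-$ tacitly uses $k$ odd; this is consistent with how \cite{MTQ2018} intends the two parts to be complementary, but you should make that restriction explicit when writing the proof out.
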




The first observation here is that the code in Theorem~\ref{thm-MTQ2018} (1) is in fact obtained by the generating set of the following abelian Ramanujan graph, called {\it Li's graph}.

\begin{definition}[\cite{LK1992}]
For any odd prime power $q$,
let $N_2:=\{z \in \F_{q^2} \mid \Nsq(z)=1\}$.
Then, {\it Li's graph} 
over $\F_{q^2}$ is the Cayley graph $Cay(\F_{q^2}^+, N_2)$.  
\end{definition}

It is known in graph theory and number theory that Li's graphs are Ramanujan graphs by Deligne's bound for Kloosterman sums over $\F_{q^2}$ (\cite{LK1992, D2023}). 

As observed in \cite{BKS2016} for $q=p \equiv 3 \pmod{4}$, one can easily verify:

\begin{proposition}
\label{prop-Li}
For any odd prime power $q$, it holds that $N_2=H_{+}$. 
Hence, the code $\mathcal{C}(\F_q^2, H_{+})$ coincides with $\mathcal{C}(\F_q^2, N_2)$.
\end{proposition}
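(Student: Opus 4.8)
The plan is to reduce the statement to a single Galois-theoretic observation: the two notions of conjugation appearing in the definitions of $N_2$ and $H_+$ coincide. Once that is established, $N_2 = H_+$ as subsets of $\F_{q^2}$, and the claim about the codes is then immediate from Definition~\ref{def-generic}, since $\mathcal{C}(\F_q^2,\cdot)$ depends only on the underlying generating set.

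First I would recall that $\Nsq(z) = z^{1+q} = z\cdot z^q$, so $\Nsq$ is the product of $z$ with its image under the Frobenius map $\sigma\colon \F_{q^2}\to\F_{q^2}$, $\sigma(w)=w^q$, which generates $\mathrm{Gal}(\F_{q^2}/\F_q)$ and fixes $\F_q$ pointwise. Writing $\F_{q^2}=\F_q[\sqrt{\delta}]$ with $\delta\in\F_q^*$ a non-square, the element $\sqrt{\delta}$ is a root of $T^2-\delta$, which is irreducible over $\F_q$ precisely because $\delta$ is a non-square; its two roots in $\F_{q^2}$ are $\pm\sqrt{\delta}$. Next I would argue that $\sigma(\sqrt{\delta})=-\sqrt{\delta}$: being an $\F_q$-algebra automorphism, $\sigma$ must send $\sqrt{\delta}$ to a root of $T^2-\delta$, and it cannot fix $\sqrt{\delta}$, since otherwise $\sigma$ would fix all of $\F_q[\sqrt{\delta}]=\F_{q^2}$, contradicting $\sigma\neq\mathrm{id}$. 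Hence for any $z=x+y\sqrt{\delta}$ with $x,y\in\F_q$ we get $z^q=\sigma(x)+\sigma(y)\sigma(\sqrt{\delta})=x-y\sqrt{\delta}=\overline{z}$, and therefore $\Nsq(z)=z\cdot z^q=z\overline{z}=\mathcal{N}(z)=x^2-\delta y^2$.

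Consequently the conditions $\Nsq(z)=1$ and $\mathcal{N}(z)=1$ define exactly the same subset of $\F_{q^2}$, i.e.\ $N_2=H_+$; identifying $\F_{q^2}$ with $\F_q^2$ as $\F_p$-vector spaces, we conclude $\mathcal{C}(\F_q^2,H_+)=\mathcal{C}(\F_q^2,N_2)$. There is essentially no obstacle; the only point needing care is the bookkeeping between the intrinsic norm $z^{1+q}$ and the coordinate norm $x^2-\delta y^2$, which is exactly what the identification $z^q=\overline{z}$ above provides. (One may also note in passing that both sets are closed under negation, since $\mathcal{N}(-z)=\mathcal{N}(z)$, so they are legitimate generating sets of the form $\{\pm\beta_1,\ldots,\pm\beta_n\}$ as required in Definition~\ref{def-generic}.)
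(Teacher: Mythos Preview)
Your proof is correct and follows essentially the same route as the paper: both reduce the claim to the identity $\Nsq=\mathcal{N}$, established by showing $z^q=\overline{z}$ via $(\sqrt{\delta})^q=-\sqrt{\delta}$. The only cosmetic difference is that you justify $(\sqrt{\delta})^q=-\sqrt{\delta}$ by the Galois-theoretic argument that Frobenius must permute the roots of $T^2-\delta$ nontrivially, whereas the paper computes it directly from $\delta^{(q-1)/2}=-1$ (Euler's criterion for the non-square $\delta$); these are equivalent one-line observations.
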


\begin{proof}
It suffices to prove that $\mathcal{N}$ coincides with $\Nsq$.
For any $z=x+y\sqrt{\delta} \in \F_{q^2}$, we have
\begin{align*}
    \Nsq(z)
    &=(x+y\sqrt{\delta})(x+y\sqrt{\delta})^q\\
    &=(x+y\sqrt{\delta})(x^q+y^q (\sqrt{\delta})^q)\\
    &=(x+y\sqrt{\delta})(x+y  (\sqrt{\delta})^q)\\
    &=(x+y\sqrt{\delta})(x+y (-\sqrt{\delta}))\\
    &=z\cdot \overline{z}=\mathcal{N}(z),
\end{align*}
where $ (\sqrt{\delta})^q=-\sqrt{\delta}$ as $\delta^{q-1}=1$ and $\delta^{\frac{q-1}{2}}\neq 1$ as $\delta$ is non-square.
\end{proof}

The second observation is that all codes in Theorem~\ref{thm-MTQ2018} are unified by the concept of {\it finite Euclidean graphs}.

\begin{definition}[\cite{BST2009, MMST1996}]
For any odd prime power $q$, let $Q(x, y)$ be a non-degenerate quadratic form over $\mathbb{F}_q$, and define $E_Q \coloneq \{(x,y) \in \mathbb{F}_q^2 \mid Q(x,y)=1\}$.
The {\it unit finite Euclidean graph} associated with $Q$ is the Cayley graph $\mathrm{Cay}(\mathbb{F}_{q^2}^+, E_Q)$.
\end{definition}

Let $Q_+(x,y) \coloneq x^2-\delta y^2$ and $Q_-(x,y) \coloneq xy$, where $\delta \in \mathbb{F}_q^*$ is a non-square element.
Note that $Q_+$ and $Q_-$ are inequivalent, and any non-degenerate quadratic form $Q$ over $\mathbb{F}_q$ is equivalent to either $Q_+$ or $Q_-$.
Recall that two quadratic forms $Q_1(\mathbf{x})$ and $Q_2(\mathbf{x})$ are equivalent if there exists $A \in \mathrm{GL}(2, \mathbb{F}_q)$ such that $Q_2(\mathbf{x})=Q_1(A\mathbf{x})$ for all $\mathbf{x} \in \mathbb{F}_q^2$.
Consequently, the finite Euclidean graph $\mathrm{Cay}(\mathbb{F}_{q^2}^+, E_Q)$ is isomorphic to either $G_+ \coloneq \mathrm{Cay}(\mathbb{F}_{q^2}^+, E_{Q_+})$ or $G_- \coloneq \mathrm{Cay}(\mathbb{F}_{q^2}^+, E_{Q_-})$. The isomorphism is induced by the linear transformation providing the equivalence between $Q$ and $Q_{\pm}$.
It is well-known that both $G_+$ and $G_-$ are (almost) Ramanujan graphs. This follows from the character tables of affine association schemes and Deligne's bound for Kloosterman sums (e.g., \cite{BST2009, MMST1996, D2023}).

We summarize the connection to the Mesnager-Qi-Tang codes as follows:

\begin{proposition}
\label{prop-euclid}
For any odd prime power $q$, we have $E_{Q_{\pm}}=H_{\pm}$, and hence, the codes $\mathcal{C}(\mathbb{F}_q^2, H_{\pm})$ in Theorem~\ref{thm-MTQ2018} coincide with $\mathcal{C}(\mathbb{F}_q^2, E_{Q_{\pm}})$.
Furthermore, for any non-degenerate quadratic form $Q$, the code $\mathcal{C}(\mathbb{F}_q^2, E_Q)$ coincides with either $\mathcal{C}(\mathbb{F}_q^2, E_{Q_{+}})$ or $\mathcal{C}(\mathbb{F}_q^2, E_{Q_{-}})$ up to coordinate permutations and sign changes (of coordinates), depending on whether $Q$ is equivalent to $Q_+$ or $Q_-$ in $\mathbb{F}_q$.
\end{proposition}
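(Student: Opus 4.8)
The plan is to verify the two set equalities $E_{Q_{\pm}}=H_{\pm}$ directly from the definitions, and then to translate the equivalence of quadratic forms into an explicit isomorphism of the associated codes. For the first equality, by the definition of $\mathcal{N}$ we have $\mathcal{N}(x+y\sqrt{\delta})=x^2-\delta y^2 = Q_+(x,y)$, so $H_+=\{z=x+y\sqrt{\delta}\in\F_{q^2}\mid Q_+(x,y)=1\}$, which is exactly $E_{Q_+}$ once we identify $z=x+y\sqrt{\delta}$ with the pair $(x,y)\in\F_q^2$ as the excerpt already does when it says ``both $H_+$ and $H_-$ are subsets of $\F_q^2$''. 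For the second equality, $H_-=\{(x,y)\in\F_q^2\mid xy=1\}$ is literally $E_{Q_-}$ since $Q_-(x,y)=xy$. Both verifications are immediate; I would state them in one or two lines each. The identity of the codes $\mathcal{C}(\F_q^2,H_{\pm})=\mathcal{C}(\F_q^2,E_{Q_{\pm}})$ is then a formal consequence of Definition~\ref{def-generic}, since the code depends only on the set (and a choice of signed labelling, which I would note is matched in the obvious way).

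For the final sentence, I would start from the fact, recalled in the excerpt, that any non-degenerate quadratic form $Q$ over $\F_q$ is equivalent to $Q_+$ or $Q_-$: there is $A\in GL(2,\F_q)$ with $Q(\mathbf{x})=Q_{\epsilon}(A\mathbf{x})$ for all $\mathbf{x}$, where $\epsilon\in\{+,-\}$. Then $\mathbf{x}\mapsto A\mathbf{x}$ is a bijection $E_Q\to E_{Q_{\epsilon}}$. The key point is to push this through to the codes: if $E_Q=\{\pm\beta_1,\dots,\pm\beta_n\}$ then $A$ sends this set to $E_{Q_{\epsilon}}=\{\pm A\beta_1,\dots,\pm A\beta_n\}$, and since $A$ is linear, $\sum_i c_i\beta_i=\mathbf{0}$ if and only if $\sum_i c_i(A\beta_i)=\mathbf{0}$. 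Hence $(c_1,\dots,c_n)\in\mathcal{C}(\F_q^2,E_Q)$ iff $(c_1,\dots,c_n)\in\mathcal{C}(\F_q^2,E_{Q_{\epsilon}})$ after reordering the indices and flipping signs to match the chosen labellings $\pm A\beta_i$ versus the fixed labelling of $E_{Q_{\epsilon}}$; this reordering-and-sign-flip is exactly a coordinate permutation together with sign changes of coordinates, which is the claimed equivalence of codes.

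The one genuine subtlety, and the step I expect to require the most care, is bookkeeping of the labelling: the set $E_{Q_{\epsilon}}$ comes with a fixed unordered-signed presentation $\{\pm\beta_1',\dots,\pm\beta_n'\}$, while $A$ produces the presentation $\{\pm A\beta_1,\dots,\pm A\beta_n\}$, and these two presentations of the same set can differ by a permutation of the $n$ pairs and, within each pair, a choice of which element is ``$+\beta_j'$''. I would make explicit that changing the sign of $\beta_j$ negates the $j$-th coordinate of every codeword (because $c_j\beta_j=(-c_j)(-\beta_j)$), and permuting the pairs permutes the coordinates; both operations are isometries for the Lee metric, so they preserve all code parameters. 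Once this is spelled out, the statement follows. I would also remark, for the reader's convenience, that the fact ``$Q_+$ and $Q_-$ are inequivalent'' guarantees $G_+\not\cong G_-$ in the relevant cases, so the two families $\mathcal{C}(\F_q^2,E_{Q_+})$ and $\mathcal{C}(\F_q^2,E_{Q_-})$ are genuinely distinct, matching the two cases of Theorem~\ref{thm-MTQ2018}; this last remark is optional and I would keep it brief.
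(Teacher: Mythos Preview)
Your proposal is correct and follows essentially the same route as the paper: the identities $E_{Q_{\pm}}=H_{\pm}$ are immediate from the definitions, and for the second claim the paper isolates exactly the lemma you sketch---an invertible $A\in GL(2,\F_q)$ realizing the equivalence of forms carries $E_Q$ bijectively to $E_{Q_\epsilon}$, and by $\F_p$-linearity of $A$ the defining relation $\sum_i c_i\beta_i=\mathbf{0}$ is preserved, with the residual ambiguity in choosing signed representatives accounting for the coordinate permutations and sign flips. Your bookkeeping paragraph on the labelling matches the paper's treatment almost verbatim.
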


This proposition, combined with Theorem~\ref{thm-MTQ2018}, implies that for any non-degenerate quadratic form $Q$, the set $E_Q$ essentially yields only the Mesnager-Tang-Qi codes.
The second claim of Proposition~\ref{prop-euclid} follows immediately from the Lemma below.

\begin{lemma}
Let $Q$ and $Q'$ be two non-degenerate quadratic forms over $\mathbb{F}_q$. If $Q$ and $Q'$ are equivalent, then $\mathcal{C}(\mathbb{F}_q^2, E_Q)$ coincides with $\mathcal{C}(\mathbb{F}_q^2, E_{Q'})$ up to coordinate permutations and sign changes.
\end{lemma}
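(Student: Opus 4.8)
The plan is to realise the equivalence between $Q$ and $Q'$ as an $\F_p$-linear automorphism of the ambient space $\F_q^2$ that carries the generating set $E_{Q}$ bijectively onto $E_{Q'}$ while respecting the natural pairing $\{\beta,-\beta\}$, so that after matching coordinates the two codes become literally the same subspace of $\F_p^n$. Concretely, by the definition of equivalence there is $A\in GL(2,\F_q)$ with $Q'(\mathbf{x})=Q(A\mathbf{x})$ for every $\mathbf{x}\in\F_q^2$; the first step is to deduce from this that $E_{Q'}=A^{-1}(E_Q)$, since $\mathbf{x}\in E_{Q'}$ exactly when $Q(A\mathbf{x})=1$, i.e.\ $A\mathbf{x}\in E_Q$.

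Next I would check that both $E_Q$ and $E_{Q'}$ genuinely fit the framework of Definition~\ref{def-generic}: a quadratic form is even, $Q(-\mathbf{x})=Q(\mathbf{x})$, and $Q(\mathbf{0})=0\neq 1$, so $E_Q$ is closed under negation and avoids $\mathbf{0}$; hence $E_Q=\{\pm\beta_1,\dots,\pm\beta_n\}$ with $n=\#E_Q/2$ distinct pairs, and likewise for $E_{Q'}$. Because $A^{-1}$ is a bijective $\F_q$-linear map (in particular $\F_p$-linear), it satisfies $A^{-1}(-\beta_i)=-A^{-1}(\beta_i)$, so $E_{Q'}=\{\pm A^{-1}\beta_1,\dots,\pm A^{-1}\beta_n\}$, again $n$ distinct pairs; thus $A^{-1}$ induces a bijection between the coordinate pairs of $E_Q$ and those of $E_{Q'}$. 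Then, enumerating $E_{Q'}$ as $(A^{-1}\beta_1,\dots,A^{-1}\beta_n)$, a tuple $(c_1,\dots,c_n)\in\F_p^n$ lies in $\mathcal{C}(\F_q^2,E_{Q'})$ iff $\sum_i c_i A^{-1}\beta_i=\mathbf{0}$ iff $A^{-1}\left(\sum_i c_i\beta_i\right)=\mathbf{0}$ iff $\sum_i c_i\beta_i=\mathbf{0}$, using injectivity of $A^{-1}$; so with this enumeration $\mathcal{C}(\F_q^2,E_{Q'})$ is exactly the same subset of $\F_p^n$ as $\mathcal{C}(\F_q^2,E_Q)$ built from $(\beta_1,\dots,\beta_n)$.

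Finally I would address the phrase ``up to coordinate permutations and sign changes'': the code $\mathcal{C}(\Gamma,H)$ depends on the chosen ordering and signs of the representatives $\beta_i$ only through permuting the $n$ coordinates (reordering the $\beta_i$) and negating individual coordinates (replacing some $\beta_i$ by $-\beta_i$), both of which are Lee-isometries; hence for arbitrary enumerations of $E_Q$ and $E_{Q'}$ the two codes agree after such operations. I do not expect a real obstacle here — the lemma is essentially bookkeeping — the only point needing care is to keep straight that $\mathcal{C}(\Gamma,H)$ is well defined only up to these coordinate operations and that the linear bijection $A^{-1}$ intertwines the $\pm$-structure, which is precisely what makes the matching of coordinate pairs canonical and the resulting identification an allowed one.
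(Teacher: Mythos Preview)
Your proof is correct and follows essentially the same approach as the paper: both use the matrix $A$ realising the equivalence to obtain $E_{Q'}=A^{-1}(E_Q)$, observe that the $\F_p$-linear map $A^{-1}$ respects the $\pm$-pairing and hence maps representatives to representatives, and conclude that with matched enumerations the two codes are literally the same subspace of $\F_p^n$, with any other enumeration differing only by coordinate permutations and sign changes. Your argument is in fact slightly more explicit in verifying that $E_Q$ avoids $\mathbf{0}$ and is closed under negation so that Definition~\ref{def-generic} applies.
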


\begin{proof}
By assumption, there exists $A \in \mathrm{GL}(2, \mathbb{F}_q)$ such that $Q'(\mathbf{x}) = Q(A\mathbf{x})$ for all $\mathbf{x} \in \mathbb{F}_q^2$. This implies
\begin{align*}
    E_{Q'} &= \{ \mathbf{x} \in \mathbb{F}_q^2 \mid Q'(\mathbf{x}) = 1 \} \\
    &= \{ \mathbf{x} \in \mathbb{F}_q^2 \mid Q(A\mathbf{x}) = 1 \} \\
    & = \{ A^{-1}\mathbf{y} \mid \mathbf{y} \in E_Q \}.
\end{align*}

Let $E_Q = S \sqcup (-S)$ where $S = \{ \beta_1, \dots, \beta_n \}$ is a fixed set of representatives.
Then, the set $S' = \{ \beta'_1, \dots, \beta'_n \}$ with $\beta'_i = A^{-1}\beta_i$ forms a valid set of representatives for $E_{Q'}$.
Note that if $\mathcal{C}(\mathbb{F}_q^2, E_{Q'})$ is defined by another set of representatives $T'$, then $S'$ and $T'$ differ only by the ordering of elements and the signs of coordinates (since $E_{Q'}$ is inverse-closed).

Let $q=p^k$, where $p$ is an odd prime and $k\geq 1$.
Viewing the ambient space $\mathbb{F}_q^2$ as a $2k$-dimensional vector space over $\mathbb{F}_p$, the map $A^{-1}$ acts as an $\mathbb{F}_p$-linear transformation, denoted by $f_A^{-1}$.
A vector $\mathbf{c} = (c_1, \dots, c_n) \in \mathbb{F}_p^n$ is a codeword of $\mathcal{C}(\mathbb{F}_q^2, E_Q)$ if and only if
\[
    \sum_{i=1}^n c_i \beta_i = \mathbf{0} \quad (\text{in } \mathbb{F}_p^{2k}).
\]
Applying $f_A^{-1}$ to both sides yields
\[
    f_A^{-1} \left( \sum_{i=1}^n c_i \beta_i \right) = \sum_{i=1}^n c_i (f_A^{-1}(\beta_i)) = \sum_{i=1}^n c_i \beta'_i = \mathbf{0}.
\]
This implies that $\mathbf{c}$ is a codeword for the code defined by representatives $\{\beta'_i\}$. Since the representatives of $E_{Q'}$ are unique only up to signs and ordering, the result follows.
\end{proof}

\begin{example}
Let $Q(x, y)=x^2+y^2$.
\begin{itemize}
    \item If $q \equiv 1 \pmod{4}$, then $-1$ is a square in $\mathbb{F}_q$, so let $i^2=-1$. $Q$ is equivalent to $Q_{-}$ via the transformation:
$$
Q(x, y)=(x+iy)(x-iy)=Q_-(u,v),
$$
where $u=x+iy$ and $v=x-iy$.
    Thus, $\mathcal{C}(\mathbb{F}_q^2, E_Q)$ coincides with $\mathcal{C}(\mathbb{F}_q^2, E_{Q_-})$ up to coordinate permutations and sign changes.
    \item If $q \equiv 3 \pmod{4}$, then $-1$ is non-square. $Q$ is equivalent to $Q_{+}$ by setting $\delta = -1$:
$$
Q(x, y)=x^2-(-1)y^2=Q_+(x,y).
$$
Thus, $\mathcal{C}(\mathbb{F}_q^2, E_Q)$ coincides with $\mathcal{C}(\mathbb{F}_q^2, E_{Q_+})$.
\end{itemize}
\end{example}

\begin{example}
Let $Q(x, y) = x^2+xy+y^2$. The discriminant of $Q$ is $-3$.
Completing the square, we have
\[
Q(x, y) = \left(x + \frac{y}{2}\right)^2 + \frac{3}{4}y^2.
\]
Let $u = x + \frac{y}{2}$ and $v = y$. Then $Q(x,y)$ is equivalent to the diagonal form $u^2 + \frac{3}{4}v^2$.

\begin{itemize}
    \item If $-3$ is a square in $\mathbb{F}_q$, then 
    $\frac{3}{4}$ can be written as $-s^2$ for some $s \in \mathbb{F}_q^*$. Thus, $Q$ is equivalent to $u^2 - s^2 v^2 = (u-sv)(u+sv)$, which is equivalent to $Q_-(x, y) = xy$.
    In this case, $\mathcal{C}(\mathbb{F}_q^2, E_Q)$ coincides with $\mathcal{C}(\mathbb{F}_q^2, E_{Q_-})$ up to coordinate permutations and sign changes.

    \item If $-3$ is a non-square in $\mathbb{F}_q$, then 
    $-\frac{3}{4}$ is a non-square (since $\frac{1}{4}$ is a square). Let $\delta = -\frac{3}{4}$. 
    Then $Q$ is equivalent to $u^2 - \delta v^2$, which is exactly $Q_+(u, v)$.
    In this case, $\mathcal{C}(\mathbb{F}_q^2, E_Q)$ coincides with $\mathcal{C}(\mathbb{F}_q^2, E_{Q_+})$ up to coordinate permutations and sign changes.
\end{itemize}
\end{example}

\section{Concluding remarks}
\label{sect-conc}
\begin{itemize}
\item It would be of significant interest to investigate whether other abelian (almost) Ramanujan graphs harbor generating sets that yield new families of quasi-perfect Lee codes. 
Such discoveries would further solidify the profound connection between the theory of Lee codes and spectral graph theory.

\item The 2-quasi-perfect Lee codes presented in this paper, as well as those in \cite{MTQ2018}, are derived from abelian (almost) Ramanujan graphs that are {\it spectrally indistinguishable} from random graphs of the same edge density. 
That is, their normalized eigenvalue distributions converge to the semicircle distribution, a hallmark of typical random graphs. 
In this sense, these graphs exhibit strong pseudo-randomness. 
Conversely, the property of 2-quasi-perfectness imposes rigid structural regularities on the underlying Cayley graph, as dictated by Lemma~\ref{lem-generic}. 
Notably, the strictly deterministic condition in Lemma~\ref{lem-generic} (2) is not a property that random graphs possess with high probability. 
Therefore, we observe that spectral indistinguishability does not preclude the existence of the highly ordered local structures required for $2$-quasi-perfectness. 
This underscores a fascinating dichotomy: spectral indistinguishability alone is insufficient to guarantee ``local" pseudo-randomness, revealing a structural gap akin to the phenomena discussed in \cite{FFKW2025}.


\end{itemize}


\section*{Acknowledgment}
The author deeply appreciates the anonymous reviewers for their careful reading and constructive comments, which significantly improved the presentation and clarity of this paper.
This work was supported by JSPS KAKENHI Grant Number JP	23K13007.
This work was also supported by Institute of Mathematics for Industry, Joint Usage/Research Center in Kyushu University (FY2025 Short-term Joint Research, Reference No. 2025a037).

\end{document}